\newtheorem{theor}{Theorem}[section]
\newtheorem{theo}[theor]{Theorem}
\newtheorem{lemma}[theor]{Lemma}
\newtheorem{defi}{Definition}[section]
\begin{document}

\title{ \bf {On the Cubicity of Bipartite Graphs} }

\author{L. Sunil Chandran$^*$, Anita Das$^\dag$, Naveen
Sivadasan$^\ddag$ \footnote {This research was funded by the DST grant SR/S3/EECE/62/2006}\\\\
$^{*, \ \dag}$ Computer Science and Automation department, Indian
Institute of
Science\\
Bangalore- 560012, India.\\
\{sunil, anita\}@csa.iisc.ernet.in\\
$\ddag$ Advanced Technology Center, Tata Consultancy Services\\
1, Software Units Layout, Madhapur\\
Hyderabad - 500081, India\\
s.naveen@atc.tcs.com}

%\author{L. Sunil Chandran$^{*}$ \thanks{{\bf (Corresponding Author.)} Computer Science and Automation department,
%Indian Institute of Science,
%Bangalore- 560012,
%India.  {\tt sunil@csa.iisc.ernet.in}} \and Anita Das
% \thanks{Computer Science and
%Automation department,
%Indian Institute of Science,
%Bangalore- 560012,
%India.  {\tt anita@csa.iisc.ernet.in}}
%\and  Naveen Sivadasan
%\thanks {Advanced Technology Center, Tata Consultancy Services,
%1, Software Units Layout, Madhapur, Hyderabad - 500081, India. email: {\tt s.naveen@atc.tcs.com}}
%}

\date{}
\maketitle

\begin{abstract}
{\it A unit cube in $k$-dimension (or a $k$-cube) is defined as the
cartesian product $R_1 \times R_2 \times \cdots \times R_k$, where
each $R_i$ is a closed interval on the real line of the form $[a_i,
a_i+1]$. The {\it cubicity} of $G$, denoted as $cub(G)$, is the
minimum $k$ such that $G$ is the intersection graph of a collection
of $k$-cubes. Many NP-complete graph problems can be solved
efficiently or have good approximation ratios in graphs of low
cubicity. In most of these cases the first step is to get a low
dimensional cube representation of the given graph.

It is known that for a graph $G$, $cub(G) \leq
\left\lfloor\frac{2n}{3}\right\rfloor$. Recently it has been shown
that for a graph $G$, $cub(G) \leq 4(\Delta + 1)\ln n$, where $n$
and $\Delta$ are the number of vertices and maximum degree of $G$,
respectively. In this paper, we show that for a bipartite graph $G =
(A \cup B, E)$ with $|A| = n_1$, $|B| = n_2$, $n_1 \leq n_2$, and
$\Delta' = \min\{\Delta_A, \Delta_B\}$, where $\Delta_A =
\mbox{max}_{a \in A}d(a)$ and $\Delta_B = \mbox{max}_{b \in B}d(b)$,
$d(a)$ and $d(b)$ being the degree of $a$ and $b$ in $G$
respectively, $cub(G) \leq 2(\Delta'+2) \lceil \ln n_2 \rceil$. We
also give an efficient randomized algorithm to construct the cube
representation of $G$ in $3(\Delta'+2)\lceil \ln n_2 \rceil$
dimensions. The reader may note that in
general $\Delta'$ can be much smaller than $\Delta$.}\\

\noindent{\bf Keywords:} Cubicity, algorithms, intersection graphs.

\end{abstract}

\section{Introduction}

Let $\mathcal{F}$ be a family of non-empty sets. An undirected graph
$G$ is an intersection graph for $\mathcal{F}$ if there exists a
one-one correspondence between the vertices of $G$ and the sets in
$\mathcal{F}$ such that two vertices in $G$ are adjacent if and only
if the corresponding sets have non-empty intersection. If
$\mathcal{F}$ is a family of intervals on real line, then $G$ is
called an {\it interval graph}. If $\mathcal{F}$ is a family of
intervals on real line such that all the intervals are of equal
length, then $G$ is called a {\it unit interval graph}.

%A $k$-dimensional box or $k$-box is the cartesian product $R_1
%\times R_2 \times \cdots \times R_k$, where each $R_i$ is a closed
%interval on the real line. The boxicity of a graph $G$ is defined to
%be the minimum integer $k$ such that $G$ is the intersection graph
%of a collection of $k$-boxes. Since $1$-boxes are nothing but closed
%intervals on the real line, interval graphs are the graphs having
%boxicity $1$.

A unit cube in $k$-dimensional space or a $k$-cube is defined as the
cartesian product $R_1 \times R_2 \times \cdots \times R_k$, where
each $R_i$ is a closed interval on the real line of the form $[a_i,
a_{i}+1]$. A $k$-cube representation of a graph is a mapping of the
vertices of $G$ to $k$-cubes such that two vertices in $G$ are
adjacent if and only if their corresponding $k$-cubes have a
non-empty intersection. The {\it cubicity} of $G$ is the minimum $k$
such that $G$ has a $k$-cube representation. Note that a $k$-cube
representation of $G$ using cubes with unit side length is
equivalent to a $k$-cube representation where the cubes have side
length $c$ for some fixed positive number $c$. The graphs of
cubicity $1$ are exactly the class of unit interval graphs.

The concept of cubicity was introduced by F. S. Roberts
\cite{roberts69} in 1969. This concept generalizes the concept of
unit interval graphs. If we require that each vertex of $G$
correspond to a $k$-dimensional axis-parallel box $R_1 \times R_2
\times \cdots \times R_k$, where each $R_i$, $1 \leq i \leq k$, is a
closed interval of the form $[a_i, b_i]$ on the real line, then the
minimum dimension required to represent $G$ is called its {\it
boxicity} denoted as $box(G)$. Clearly $box(G) \leq cub(G)$, for a
graph $G$. It has been shown that deciding whether the cubicity of a
given graph is at least three is NP-complete \cite{yannakakis}.
Computing the boxicity of a graph was shown to be NP-hard by Cozzens
in \cite{cozzens81}. This was later strengthened by Yannakakis
\cite{yannakakis}, and finally by Kratochvil \cite{kratochvil94} who
showed that deciding whether boxicity of a graph is at most two
itself is NP-complete.

%In many algorithmic problems related to graphs, the availability of
%certain convenient representations turn out to be useful. Probably,
%the most well-known and important examples are the tree
%decompositions and path decompositions. Many NP-hard problems are
%known to be polynomial time solvable given a tree (path)
%decomposition of the input graph that has bounded width. Similarly,
%the representation of graphs as intersections of ``disks'' or
%``spheres'' lies at the core of solving problems related to
%frequency assignments in radio networks, computing molecular
%conformations etc. For the maximum independent set problem which is
%hard to approximate within a factor of $n^{\frac{1}{2}-\epsilon}$
%for general graphs, a PTAS is known for disk graphs given the disk
%representation \cite{afshani,erlebach}. In a similar way, the
%availability of box or cube representation in low dimension make
%some well known NP-hard problems polynomial time solvable. For
%example, the max-clique problem is polynomial time solvable if the
%boxicity or cubicity is at most $k$ in a graph as there are only
%$O((2n)^k)$ maximal cliques in such graphs. It was shown in
%\cite{hastad} that the complexity of finding the maximum independent
%set is hard to approximate within a factor
%$n^{\frac{1}{2}-\epsilon}$ for general graphs. Though this problem
%is NP-hard even for boxicity two graphs, it is approximable to a
%$\log n$ factor for boxicity and cubicity two graphs given a box or
%cube representation \cite{agarwal,berman}.

Thus, it is interesting to design efficient algorithms to represent
small cubicity graphs in low dimension. There have been many
attempts to bound the cubicity of graph classes with special
structure. The cube and box representations of special classes of
graphs like hypercubes and complete multipartite graphs were
investigated in \cite{irith, cub1, cub2, maehara, quint, roberts69,
trotter}.

%Scheinerman \cite{Scheinerman} showed that boxicity of outer planar
%graphs is at most two. Thomassen \cite{thomassen} proved that the
%boxicity of planar graphs is bounded above by three. The boxicity of
%split graphs is investigated by Cozzens and Roberts
%\cite{cozzenrobert}. Upper bounds on the boxicity of some special
%classes of graphs such as chordal graphs, circular arc graphs,
%AT-free graphs, permutation graphs, co-comparability graphs are
%given in \cite{boxicitytreewidth}.

\subsection{Our results}

Recently Chandran {\sl et al.} \cite{cubicity} have shown that for a
graph $G$, $cub(G) \leq 4(\Delta + 1)\ln n$, where $n$ and $\Delta$
are the number of vertices and maximum degree of $G$, respectively.
In this paper, we present an efficient randomized algorithm to
construct a cube representation of {\it bipartite graphs} in low
dimension. In particular, we show that for a bipartite graph $G = (A
\cup B, E)$, $cub(G) \leq 2(\Delta'+2) \lceil \ln n_2 \rceil$, where
$|A| = n_1$, $|B| = n_2$, $n_1 \leq n_2$, and $\Delta' =
\min\{\Delta_A, \Delta_B\}$, where $\Delta_A = \mbox{max}_{a \in
A}d(a)$ and $\Delta_B = \mbox{max}_{b \in B}d(b)$, $d(a)$ and $d(b)$
being the degree of $a$ and $b$ in $G$, respectively. The algorithm
presented in this paper is not very different from that of
\cite{cubicity} but this has the advantage that it gives  a better
result in the case of bipartite graphs. Note that, $\Delta'$ can be
much smaller than $\Delta$ in general, where $\Delta$ is the maximum
degree of $G$. In particular, when $|A| \ll |B|$, then the bound for
cubicity given in this paper can be much better than that given in
\cite{cubicity}. Also, the complexity of our algorithm is comparable
with the complexity of the algorithm proposed in \cite{cubicity}.

\section{Preliminaries}

Let $G = (A \cup B, E)$ be a simple, finite bipartite graph. Let
$|A| = n_1$, $|B| = n_2$, and $n_1 \leq n_2$. Let $N(v)= \{w \in
V(G) | vw \in E(G)\}$ be the set of neighbors of $v$. Degree of a
vertex $v$, denoted as $d(v)$, is defined as the number of edges
incident on $v$. That is, $d(v) = |N(v)|$. Suppose $\Delta_A$ denote
the maximum degree in $A$ and $\Delta_B$ denote the maximum degree
in $B$. That is, $\Delta_A = \mbox{max}_{a \in A}{d(a)}$ and
$\Delta_B = \mbox{max}_{b \in B}{d(b)}$.

For a graph $G$, let $G'$ be a graph such that $V(G') = V(G)$. Then,
$G'$ is a {\it super graph} of $G$ if $E(G) \subseteq E(G')$. We
define the {\it intersection} of two graphs as follows: if $G_1$ and
$G_2$ are two graphs such that $V(G_1) = V(G_2)$, then the
intersection of $G_1$ and $G_2$ denoted as $G = G_1 \cap G_2$ is a
graph with $V(G) = V(G_1) = V(G_2)$ and $E(G) = E(G_1) \cap E(G_2)$.

Let $I_1, I_2, \ldots, I_k$ be $k$ unit interval graphs such that $G
= I_1 \cap I_2 \cap \cdots \cap I_k$, then $I_1, I_2, \ldots, I_k$
is called an {\it unit interval graph representation} of $G$. The
following equivalence is well known.

\begin{theo} [\cite{roberts69}] \label{1}
The minimum $k$ such that there exists a unit interval graph
representation of $G$ using $k$ unit interval graphs $I_1, I_2,
\ldots, I_k$ is the same as $cub(G)$.
\end{theo}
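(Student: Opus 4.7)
The plan is to prove the equality by establishing the two inequalities separately, both of which follow from the elementary observation that two axis-parallel boxes in $\mathbb{R}^k$ intersect if and only if their projections on every coordinate axis intersect. Let $k^*$ denote the minimum number of unit interval graphs whose intersection is $G$; I would show $cub(G) \leq k^*$ and $k^* \leq cub(G)$.

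For the first direction, assume $G = I_1 \cap I_2 \cap \cdots \cap I_{k^*}$ with each $I_j$ a unit interval graph. I would fix a unit interval representation of each $I_j$, denoting by $R_j(v)$ the unit-length interval assigned to vertex $v$ in $I_j$, and then combine these coordinate-wise by setting $C_v := R_1(v) \times R_2(v) \times \cdots \times R_{k^*}(v)$, which is a $k^*$-cube. By the axis-parallel intersection fact, $C_u \cap C_v \neq \emptyset$ iff $R_j(u) \cap R_j(v) \neq \emptyset$ for every $j$, iff $uv \in E(I_j)$ for every $j$, iff $uv \in E(G)$. Thus $\{C_v\}_{v \in V(G)}$ witnesses $cub(G) \leq k^*$.

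For the reverse direction, I would start from a $cub(G)$-cube representation $v \mapsto C_v = R_1(v) \times \cdots \times R_{cub(G)}(v)$ of $G$ and, for each coordinate $j$, let $I_j$ be the intersection graph of the family of unit intervals $\{R_j(v) : v \in V(G)\}$. Each $I_j$ is by construction a unit interval graph. The same projection argument gives $uv \in E(G)$ iff $uv \in E(I_j)$ for every $j$, i.e.\ $G = I_1 \cap \cdots \cap I_{cub(G)}$, which shows $k^* \leq cub(G)$.

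I do not anticipate a real obstacle here; the argument is essentially bookkeeping built on a single geometric observation. The only point to be careful about is keeping the direction of the ``iff'' consistent when translating between the cube picture and the family-of-interval-graphs picture, and noting that intersection of edge sets corresponds exactly to the simultaneous-coordinate-intersection condition.
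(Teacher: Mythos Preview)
Your proof is correct and is essentially the standard argument for this equivalence. Note, however, that the paper does not actually prove this theorem: it is stated as a known result attributed to Roberts~\cite{roberts69} and used as a black box, so there is no ``paper's own proof'' to compare against. One small point worth making explicit in your write-up is that, since a unit interval graph is defined merely as one representable by \emph{equal-length} intervals, you should remark that each $I_j$ can be rescaled so that all representations use intervals of length~$1$ before forming the product $C_v = R_1(v)\times\cdots\times R_{k^*}(v)$; otherwise the product need not be a cube.
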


\section{Construction}

Let $G = (A \cup B, E)$ be a bipartite graph. In this section we
describe an algorithm to efficiently compute a cube representation
of $G$ in $2(\Delta'+2) \lceil \ln n_2 \rceil $ dimensions, where $\Delta' =
\min\{\Delta_A, \Delta_B\}$.

\begin{defi}
Let $\pi$ be a permutation of the set $\{1, 2,\ldots,n\}$ and $X
\subseteq \{1, 2, \ldots, n\}$. The projection of $\pi$ onto $X$
denoted as $\pi_X$ is defined as follows. Let $X = \{u_1,
u_2,\ldots, u_r\}$ be such that $\pi(u_1) < \pi(u_2) < \ldots <
\pi(u_r)$. Then $\pi_X(u_1) = 1, \pi_X(u_2) = 2, \ldots, \pi_X(u_r)
= r$.
\end{defi}

\begin{defi}
A graph $G = (V, E)$ is a unit interval graph if and only if there
exists a function $f : V \longrightarrow R$ and a constant $c$ such
that $(u, v) \in E(G)$ if and only if $|f(u) - f(v)| \leq c$.
\end{defi}

\noindent{\bf Remark:} Note that the above definition is consistent
with the definition of the unit interval graphs given at the
beginning of the introduction.\\

Let $G = (A \cup B, E)$ be a bipartite graph.
Given a permutation of the vertices of $A$, we construct
a unit interval graph $U(\pi, A, B, G)$ as follows. Let $f : A \cup
B \longrightarrow R$ be such that if $v \in A$, then $f(v) = \pi(v)$
and if $v \in B$, then $f(v) = n + \min_{x \in N(v)}{\pi(x)}$. Two
vertices $u, v \in A \cup B$ are made adjacent if and only if $|f(u) -
f(v)| \leq n$, where $n = |A| + |B| = n_1 + n_2$.\\

\noindent{\bf Claim 1:} Let $G' = U(\pi, A, B, G)$. Then $G'$ is a
supergraph of $G$.

\begin{proof}

Suppose $(a, b) \in E(G)$. Without loss of generality suppose
$a \in A$ and $b \in B$. Let $s = \min_{x \in N(b)}{\pi(x)}$. So,
$f(b) = n + s$. As $f(a) = \pi(a)$ and $a \in N(b)$, $\pi(a) \geq
s$. Therefore we have, $|f(b) - f(a)| = n + s - \pi(a) \leq n$. Thus
$(a, b) \in E(G')$. Hence $G'$ is a supergraph of $G$.
\end{proof}

\noindent {\bf Remark:} Note that if we reverse the roles of $A$ and
$B$ in the above construction, i.e., if we start with a permutation
of the vertices of $B$ rather than that of $A$, then the resulting
unit interval graph will be denoted as $U(\pi,B,A,G)$. Clearly,
$U(\pi,B,A,G)$ will also be a super graph of $G$.\\

\noindent{\bf RANDUNIT}

Input: A bipartite graph $G = (A \cup B, E)$.

Output: A unit interval graph $G'$ which is a supergraph of $G$.

\noindent{\bf begin}

{\bf if} ($\Delta_B \leq \Delta_A$) then

Step 1. Generate a permutation $\pi$ of $\{1, 2,\ldots, n_1\}$ (the
vertices of $A$)

uniformly at random.

Step 2. Return $G' = U(\pi, A, B, G)$.

{\bf else}

Step 1. Generate a permutation $\pi$ of $\{1, 2,\ldots, n_2\}$ (the
vertices of $B$)

uniformly at random.

Step 2. Return $G' = U(\pi, B, A, G)$.

\noindent{\bf end}\\

\begin{lemma}\label{1.1}
Let $a \in A$ and $b \in B$ be such that $e = (a, b) \notin E(G)$.
Let $G'$ be the output of {\bf RANDUNIT($G$)}. Then

\begin{center}
$\Pr[e \in E(G')] \leq \frac{\Delta'}{\Delta' + 1}$.
\end{center}
\end{lemma}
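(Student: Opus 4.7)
The plan is to handle the two branches of RANDUNIT separately; by symmetry I only describe the case $\Delta_B \leq \Delta_A$, where $\Delta' = \Delta_B$, $\pi$ is a uniformly random permutation of $A$, and $G' = U(\pi,A,B,G)$. The first step is to translate the event $e \in E(G')$ into a condition on $\pi$. Unfolding the definitions, $f(a) = \pi(a)$ and $f(b) = n + \min_{x \in N(b)} \pi(x)$, so $|f(a) - f(b)| = n + \min_{x \in N(b)} \pi(x) - \pi(a)$, and this is at most $n$ exactly when $\pi(a) \geq \min_{x \in N(b)} \pi(x)$. Consequently $e \notin E(G')$ is equivalent to saying that $\pi(a) < \pi(x)$ for every $x \in N(b)$, i.e., that $a$ attains the minimum $\pi$-value on the set $\{a\} \cup N(b)$.

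The second step is a short counting argument. Since $(a,b) \notin E(G)$, we have $a \notin N(b)$, so $\{a\} \cup N(b)$ is a set of exactly $d(b)+1$ distinct vertices of $A$. A uniformly random permutation of $A$ induces a uniformly random relative order on any fixed subset, so the probability that $a$ is the minimum element of $\{a\} \cup N(b)$ equals $\frac{1}{d(b)+1}$. Hence
\[
\Pr[e \in E(G')] \;=\; 1 - \frac{1}{d(b)+1} \;=\; \frac{d(b)}{d(b)+1}.
\]
Since $x \mapsto x/(x+1)$ is increasing on $[0,\infty)$ and $d(b) \leq \Delta_B = \Delta'$ in this branch, this is at most $\frac{\Delta'}{\Delta'+1}$, as required.

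The else-branch is handled by the symmetric computation: $\pi$ now permutes $B$, $\Delta' = \Delta_A$, and the analogous identity gives $\Pr[e \in E(G')] = \frac{d(a)}{d(a)+1} \leq \frac{\Delta_A}{\Delta_A+1} = \frac{\Delta'}{\Delta'+1}$. I do not expect a genuine obstacle; the only subtlety is recognizing that the algorithm's choice to always permute the side with smaller maximum degree is exactly what ensures the ``critical set'' $\{a\} \cup N(b)$ (or $\{b\} \cup N(a)$) has size bounded by $\Delta' + 1$, which is what drives the claimed bound.
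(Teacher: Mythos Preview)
Your proof is correct and follows essentially the same route as the paper: translate the event $e\in E(G')$ into the condition that $a$ is not the $\pi$-minimum of $\{a\}\cup N(b)$, then use that a uniformly random permutation puts $a$ first with probability $1/(d(b)+1)$, and bound via $d(b)\le\Delta'$. Your write-up is in fact a bit more explicit than the paper's (you note $a\notin N(b)$ before the counting and you spell out the monotonicity of $x/(x+1)$), but the argument is the same.
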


\begin{proof}
{\bf Case I:} $\Delta_B \leq \Delta_A$.

Let $\pi$ be a  permutation of the vertices in $A$. Let $G' = U(\pi,
A, B, G)$. Suppose two vertices $a \in A$ and $b \in B$ are
non-adjacent in $G$. Let $t = \min_{x \in N(b)}{\pi(x)}$.\\

\noindent{\bf Claim:} The vertices $a$ and $b$ will be adjacent in
$G'$ if and only if $\pi(a) > t$.

If $a$ and $b$ are adjacent in $G'$, then we have $|f(b) - f(a)| =
|(n+t) - \pi(a)| \leq n$, i.e., $\pi(a) > t$. Hence $a$ is adjacent
to $b$ in $G'$.

So, $\Pr[e \in E(G')] = \Pr[\pi(a) > t]$ = $1 - \Pr[\pi(a) < t]$.
(Note that $\pi(a) \ne t$, since $a \notin N(b)$.)
Let $X = \{a\} \cup N(b)$ and $\pi_X$ be the projection of $\pi$ on
$X$. Total number of permutations of $X$ is $(d(b) + 1) !$. Now, it
can be easily seen that $\pi(a) < t$ if and only if $\pi_X(a) = 1$.
Thus,

\vspace{-.7cm}

\begin{eqnarray*}
  \Pr[(a, b) \in E(G')]
     &=& 1 - \frac{d(b) !}{(d(b) + 1) !} \\
     &=& \ \frac{d(b)}{d(b) + 1} \\
     &\leq& \frac{\Delta'}{\Delta' + 1}
     \end{eqnarray*}

Hence the lemma.

\noindent{\bf Case II:} $\Delta_A \leq \Delta_B$.

Let $\pi$ be the permutation of the vertices in $B$. Let $G' = U(\pi,
B, A, G)$. Proof is similar to case I.
\end{proof}

\begin{lemma}\label{3.2}
Given a bipartite graph $G = (A \cup B, E)$,  there exists a super
graph $G^*$ of $G$ with $cub(G^*) \le 2(\Delta' + 1) \ln n_2$, such
that if $u \in A$, $v \in B$ and $(u,v) \notin E(G)$, then $(u,v)
\notin E(G^*)$.
\end{lemma}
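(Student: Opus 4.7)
The plan is to use the probabilistic method on top of the \textbf{RANDUNIT} procedure. I will run \textbf{RANDUNIT}$(G)$ independently $k = \lceil 2(\Delta'+1)\ln n_2 \rceil$ times to produce unit interval graphs $I_1, I_2, \ldots, I_k$, and then set $G^* = I_1 \cap I_2 \cap \cdots \cap I_k$. By Claim~1 each $I_i$ is a supergraph of $G$, so $G^*$ contains $G$ as a subgraph. By Theorem~\ref{1}, $cub(G^*) \le k$, which matches the bound asserted in the lemma.

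The remaining task is to show that with positive probability no extra $A$--$B$ non-edge survives in $G^*$. Fix any pair $(u,v)$ with $u \in A$, $v \in B$ and $(u,v) \notin E(G)$. Since the $I_i$'s are independent, Lemma~\ref{1.1} gives
\begin{equation*}
\Pr[(u,v) \in E(G^*)] \;=\; \prod_{i=1}^{k} \Pr[(u,v) \in E(I_i)] \;\le\; \left(\frac{\Delta'}{\Delta'+1}\right)^k.
\end{equation*}
Using the standard inequality $(1 - 1/(\Delta'+1))^{\Delta'+1} \le 1/e$ and plugging in $k = 2(\Delta'+1)\ln n_2$, this upper bound becomes at most $e^{-2\ln n_2} = 1/n_2^2$.

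The number of candidate non-edges of this form is at most $n_1 n_2$. A union bound over all such pairs yields
\begin{equation*}
\Pr[\,\exists\, u\in A,\, v\in B,\, (u,v)\notin E(G),\, (u,v)\in E(G^*)\,] \;\le\; \frac{n_1 n_2}{n_2^2} \;\le\; 1,
\end{equation*}
where the last inequality uses $n_1 \le n_2$. The calculation I just sketched only gives $\le 1$, so the minor technical point I need to be careful about is to make the inequality strict: this follows because $(1 - 1/(\Delta'+1))^{\Delta'+1} < 1/e$ strictly for any finite $\Delta'$, so the union-bound quantity is strictly less than $1$. Hence with positive probability no spurious $A$--$B$ non-edge appears in $G^*$, and a concrete realization of $G^*$ with the required property exists.

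The one thing to flag is that the conclusion only controls non-edges between $A$ and $B$; $G^*$ may contain edges inside $A$ or inside $B$, but that is permitted by the statement (and unavoidable, since each $I_i$ is not bipartite in general). The genuine technical content is therefore just the union bound together with the $(1-1/x)^x \le 1/e$ estimate; there is no combinatorial obstacle beyond verifying, via $n_1 \le n_2$, that $n_1 n_2$ rather than $n_2^2$ appears as the number of candidate non-edges, which is exactly what lets the choice $k = 2(\Delta'+1)\ln n_2$ suffice.
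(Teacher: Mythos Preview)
Your proof is correct and follows essentially the same argument as the paper: independent invocations of \textbf{RANDUNIT}, intersection to form $G^*$, the single-pair bound from Lemma~\ref{1.1}, the estimate $(1-1/(\Delta'+1))^{\Delta'+1}<1/e$, and a union bound over at most $n_1n_2\le n_2^2$ cross pairs. Your added remarks on strictness and on the ceiling of $k$ are minor refinements of points the paper leaves implicit.
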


\begin{proof}
Let $U_1, U_2,\ldots,U_t$ be the unit interval graphs generated by $t$
invocations of {\bf RANDUNIT($G$)}. Clearly $U_i$, for each $i$, $1
\leq i \leq t$, is a super graph of $G$ by Claim 1. Let $G^* = U_1
\cap U_2 \cap \cdots \cap U_t$.  Now let $u \in A$, $v \in B$ and
$(u,v) \notin E(G)$. Then,

$\Pr[(u, v) \in G^*] = \Pr\left[\displaystyle\bigwedge_{1 \leq i
\leq t}{(u, v) \in E(U_i)}\right] \leq \left (\frac{\Delta'}{\Delta' + 1} \right )^t$
(Applying  Lemma \ref{1.1}).
Now,

\begin{eqnarray*}
   Pr\left[\bigvee_{u \in A, b\in B, (u, v) \notin E(G)} {(u, v) \in E(G^*)}\right]
   &<& n_1n_2 \left(\frac{\Delta'}{\Delta' + 1}\right)^t \\
   &\le& n_2^2 \left(1 - \frac{1}{\Delta' + 1}\right)^t \\
   &\le& n_2^2 \ e^{-\frac{t}{\Delta' + 1}}
\end{eqnarray*}

If  $t = 2(\Delta' + 1)\ln n_2$ the above probability is $< 1$. Thus
we infer that there exists a super graph $G^*$ of $G$ such that if
$u \in A$, $v\in B$ and $(u,v) \notin E(G)$, $(u,v) \notin E(G^*)$ also.
From the definition of $G^*$ we have $cub(G^*) \le 2 (\Delta' + 1) \ln n_2$. Hence the
Lemma.
\end {proof}

\noindent {\bf Remark:} If we had chosen  $t = 3(\Delta' + 1)\ln
n_2$ in the above  proof, we can substantially reduce the failure
probability. More precisely we can get $$\Pr (G^* \mbox { does not
satisfy the desired property } )  \leq \frac{1}{n_2}$$

Now we will construct two special graphs $H_1$ and $H_2$ such that
$H_i$ is a super graph of $G$ for $i=1,2$.

\begin {defi}
\label {H1definition}
Let $A = \{v_1, v_2, \ldots,v_{n_1}\}$.
For $1 \le i \le \lceil \ln n_1 \rceil$ define the function $f_i: A \cup B \rightarrow R$ as follows:

\begin {eqnarray*}
\mbox { For vertices from $A$, }
   f_i(v_j)  &=& 0  \mbox { if the $i$th bit of $j$ is 0 } \\
   f_i(v_j)  &=& 2  \mbox { if the $i$th bit of $j$ is 1 }  \\
 \mbox { For vertices in $ u \in B$, }    f_i(u) &=& 1
\end {eqnarray*}

\noindent Let $I_i$ be the unit interval graph defined on the vertex set $A \cup B$ such that
two vertices $u$ and $v$ are adjacent if and only if $|f_i(u) - f_i(v)| \le 1$.

Now define $H_1 = \bigcap_{i=1}^{\lceil \ln n_1 \rceil} I_i$.  Thus we have
$cub(H_1) \le \lceil \ln n_1 \rceil$.
\end {defi}

\begin {defi}
\label {H2definition}
Let $B = \{u_1, u_2, \ldots,u_{n_2}\}$.
For $1 \le i \le \lceil \ln n_2 \rceil$ define the function $g_i: A \cup B \rightarrow R$ as follows:

\begin {eqnarray*}
 \mbox { For vertices from $B$, }
   g_i(u_j)  &=& 0  \mbox { if the $i$th bit of $j$ is 0 } \\
   g_i(u_j)  &=& 2  \mbox { if the $i$th bit of $j$ is 1 }  \\
 \mbox { For vertices in $ v \in A$, }  g_i(v) &=&  1
\end {eqnarray*}

\noindent Let $J_i$ be the unit interval graph defined on the vertex set $A \cup B$ such that
two vertices $u$ and $v$ are adjacent if and only if $|g_i(u) - g_i(v)| \le 1$.

Now define $H_2 = \bigcap_{i=1}^{\lceil \ln n_2 \rceil} J_i$.  Thus $cub(H_2) \le \lceil \ln n_2
\rceil$.
\end {defi}

\begin {lemma}
\label {H1lemma}
$H_1$ is a super graph of $G$ such that if $u,v \in A$, then $(u,v) \notin E(H_1)$.
\end {lemma}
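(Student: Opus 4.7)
The plan is to verify the two required properties of $H_1$ separately, each by simply unpacking Definition~\ref{H1definition}. The key structural observation is that in every $I_i$, vertices in $A$ are assigned to $\{0,2\}$ while every vertex in $B$ sits exactly at $1$. This already forces the two conclusions.

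For the supergraph claim, I would take an arbitrary edge of $G$. Since $G$ is bipartite with parts $A$ and $B$, such an edge has the form $(v_j, u)$ with $v_j \in A$ and $u \in B$. Then for every $i \in \{1, \ldots, \lceil \ln n_1 \rceil\}$ we have $f_i(v_j) \in \{0,2\}$ and $f_i(u) = 1$, so $|f_i(v_j) - f_i(u)| = 1 \le 1$, meaning $(v_j,u) \in E(I_i)$. As this holds for every $i$, the edge lies in $H_1 = \bigcap_{i} I_i$.

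For the second claim, I would take any two distinct vertices $v_j, v_k \in A$. Since $j \ne k$, their binary representations differ in some bit position $i$, so one of $f_i(v_j), f_i(v_k)$ equals $0$ and the other equals $2$. Hence $|f_i(v_j) - f_i(v_k)| = 2 > 1$, so $(v_j, v_k) \notin E(I_i)$, and therefore $(v_j, v_k) \notin E(H_1)$.

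There is no real obstacle here; both parts follow immediately by checking the distance between $f_i$-values under the two cases (cross-part edge versus same-part pair in $A$). The only implicit assumption, which I would flag in passing, is that $\lceil \ln n_1 \rceil$ bit positions are enough to give distinct binary labels to the $n_1$ vertices of $A$ (i.e., that $\ln$ is being used as a stand-in for $\log_2$ in the paper's notation). Once that is granted, the argument is a two-line unpacking of the definitions.
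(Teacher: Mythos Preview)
Your proof is correct and follows exactly the same approach as the paper: showing each $I_i$ is a supergraph of $G$ by checking $|f_i(v_j)-f_i(u)|=1$ for cross-part edges, and then using a differing bit position to separate any two vertices of $A$ in some $I_t$. You have simply made explicit the computations that the paper leaves as ``easy to check''; your side remark about $\ln$ versus $\log_2$ is also a fair observation about the paper's implicit convention.
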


\begin {proof}
It is easy to check that $I_i$ is a super graph of $G$ for each $i$.
Thus $H_1$ is clearly a super graph of $G$.
For $u,v \in A$, let $u = v_j$ and $v = v_k$ where $k \ne j$. Then clearly there exists a $t$,
$1 \le t \le \lceil \ln n_1 \rceil$ such that $j$ and $k$ differs in the
$t$th bit position. Now it is easy to verify that  $u$ and $v$ will not be adjacent
in $I_t$.  It follows  that for any pair $(u,v)$ where $u, v \in A$ there
exists $I_t$ such that $(u,v) \notin E(I_t)$. Then clearly $(u,v) \notin E(H_1)$ also.
Hence the Lemma.
\end {proof}

\begin {lemma}
\label {H2lemma}
$H_2$ is a super graph of $G$ such that if $u,v \in B$, then $(u,v) \notin E(H_2)$.
\end {lemma}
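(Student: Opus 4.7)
The plan is to mirror the proof of Lemma~\ref{H1lemma} essentially verbatim, with the roles of $A$ and $B$ interchanged. The construction of $H_2$ via the functions $g_i$ is completely symmetric to that of $H_1$ via $f_i$, so I expect no new ideas to be needed; the only care required is to check each of the three vertex-pair types ($A$--$A$, $A$--$B$, $B$--$B$) under the new functions $g_i$.

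First I would verify that each $J_i$ is a super graph of $G$. The edges of $G$ go only between $A$ and $B$, so I would take any edge $(v,u)$ with $v \in A$ and $u \in B$ and observe that $g_i(v) = 1$ while $g_i(u) \in \{0, 2\}$, whence $|g_i(v) - g_i(u)| = 1 \le 1$, putting the edge in $E(J_i)$. Intersecting the $J_i$ over $1 \le i \le \lceil \ln n_2 \rceil$ then shows $H_2$ is a super graph of $G$.

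Next I would handle the key property: for distinct $u = u_j$ and $v = u_k$ in $B$, since $j \neq k$, their indices differ in some bit position $t$ with $1 \le t \le \lceil \ln n_2 \rceil$. At that coordinate one of $g_t(u), g_t(v)$ is $0$ and the other is $2$, so $|g_t(u) - g_t(v)| = 2 > 1$, meaning $(u,v) \notin E(J_t)$, and therefore $(u,v) \notin E(H_2)$.

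I do not anticipate any genuine obstacle; the only thing to keep straight is that the bound $\lceil \ln n_2 \rceil$ (rather than $\lceil \ln n_1 \rceil$) is sufficient because $|B| = n_2$, and that the case $u,v \in A$ need not be considered for the stated conclusion (indeed, in $H_2$ any two vertices of $A$ are adjacent since they both map to $1$ under every $g_i$, but this is harmless for the lemma).
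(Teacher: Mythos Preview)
Your proposal is correct and is exactly what the paper intends: its own proof simply reads ``The proof is similar to that of Lemma~\ref{H1lemma},'' and you have faithfully carried out that symmetry argument with the roles of $A$ and $B$ swapped. No additional ideas are needed.
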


\begin {proof}
The proof is similar to that of the  Lemma \ref {H1lemma}.
\end {proof}

\begin{theo}
Given a bipartite graph $G = (A \cup B, E)$,  $cub(G) \leq
2(\Delta'+2)  \lceil \ln n_2 \rceil $.
\end{theo}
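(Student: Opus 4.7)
The plan is to combine the three supergraphs already constructed, namely $G^*$ from Lemma \ref{3.2}, and $H_1$, $H_2$ from Definitions \ref{H1definition} and \ref{H2definition}, and show that their intersection is exactly $G$. Once that is done, the cubicity bound follows immediately from the fact that the cubicity of an intersection is at most the sum of the cubicities.

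Concretely, I would set $G' = G^* \cap H_1 \cap H_2$. Since each of $G^*, H_1, H_2$ is a supergraph of $G$, we have $E(G) \subseteq E(G')$. For the reverse inclusion, I would argue case-by-case on a non-edge $(u,v) \notin E(G)$. If $u,v \in A$, then Lemma \ref{H1lemma} gives $(u,v) \notin E(H_1)$, hence $(u,v) \notin E(G')$. If $u,v \in B$, then Lemma \ref{H2lemma} kills the edge in $H_2$. Finally, if $u \in A$ and $v \in B$, then the property of $G^*$ guaranteed by Lemma \ref{3.2} gives $(u,v) \notin E(G^*)$, so $(u,v) \notin E(G')$. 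Thus $G = G'$.

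It now suffices to bound $cub(G)$ by the sum of the three cubicities. Strictly speaking, in Lemma \ref{3.2} the number of random trials $t = 2(\Delta'+1)\ln n_2$ should be rounded up, giving $cub(G^*) \le \lceil 2(\Delta'+1)\ln n_2 \rceil \le 2(\Delta'+1)\lceil \ln n_2 \rceil$ since $2(\Delta'+1)$ is a positive integer. Combined with $cub(H_1) \le \lceil \ln n_1 \rceil \le \lceil \ln n_2 \rceil$ (using $n_1 \le n_2$) and $cub(H_2) \le \lceil \ln n_2 \rceil$, we obtain
\[
cub(G) \le 2(\Delta'+1)\lceil \ln n_2\rceil + 2\lceil \ln n_2\rceil = 2(\Delta'+2)\lceil \ln n_2\rceil.
\]

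There is essentially no hard step: all the work has been done in the previous lemmas. The only mild subtlety is checking that the three supergraphs together rule out every possible non-edge of $G$, which is exactly why $H_1$ and $H_2$ were engineered to handle the within-$A$ and within-$B$ pairs that the bipartite-centered construction $G^*$ ignored. The rest is simple arithmetic with logarithms and ceilings.
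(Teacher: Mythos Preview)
Your proposal is correct and follows essentially the same argument as the paper: intersect $G^*$, $H_1$, $H_2$, verify $G = G^* \cap H_1 \cap H_2$ by the same three-case analysis on non-edges, and then sum the three cubicity bounds. Your added remark about rounding $t$ up to an integer and using $\lceil 2(\Delta'+1)\ln n_2\rceil \le 2(\Delta'+1)\lceil \ln n_2\rceil$ is a nice bit of care that the paper glosses over.
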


\begin {proof}
By Lemma \ref {3.2}, there exists a super graph $G^*$ of  $G$ such that if $u \in A$, $v\in B$
and $(u,v) \notin E(G)$, then $(u,v) \notin E(G^*)$.  Also let $H_1$ and $H_2$ be
the super graphs of $G$, from definitions \ref {H1definition}  and \ref {H2definition}
 respectively.  Now we claim
that $G = G^* \cap H_1 \cap H_2$. Cleary $G^* \cap H_1 \cap H_2$ is a super graph of
$G$, because each of them is a super graph of $G$.
 Now to see that $G^* \cap H_1 \cap H_2 = G$ we only need to prove that if
$(u,v) \notin G$, then $(u,v)$ is not an edge of at least one of these three graphs.
Now, if $u \in A$ and $v \in B$, $(u,v) \notin E(G^*)$ by Lemma \ref {3.2}.  If $u,v \in A$,
then $(u,v) \notin E(H_1)$ by Lemma \ref {H1lemma} and if $u,v \in B$, then $(u,v) \notin E(H_2)$
by Lemma \ref {H2lemma}.

Now, $cub(G) = cub(G^* \cap H_1 \cap H_2) \le  cub(G^*)  + cub(H_1) + cub(H_2)$.
By Lemma \ref {3.2}  $cub(G^*) \le 2(\Delta' + 1) \ln n_2$. Also by the definition
 of $H_1$ and $H_2$ we have $cub(H_1) \le \lceil \ln n_1 \rceil$ and
$cub(H_2) \le \lceil \ln n_2 \rceil$.
Thus we have,
\begin{eqnarray*}
  cub(G) &\leq& 2(\Delta' + 1)\ln n_2 + \lceil \ln n_1 \rceil  + \lceil \ln n_2 \rceil  \\
   &\leq&  2(\Delta' + 1)\ln n_2 + 2 \lceil \ln n_2 \rceil  \ \ \ \ \ \ \mbox{as $n_1 \leq n_2$}\\
   &=& 2(\Delta' + 2) \lceil \ln n_2 \rceil
\end{eqnarray*}

Hence the theorem.
\end{proof}

\noindent {\bf Remark:} In view of the Remark after Lemma \ref
{3.2}, we can infer that if $t \geq 3(\Delta'+1)\ln n_2$,   $G =G^*
\cap H_1 \cap H_2$ with high probability. But then the cube
representation output by the algorithm will be in $3(\Delta'+1) \ln
n_2 +  \lceil \ln n_2 \rceil + \lceil \ln n_1 \rceil   \leq
 3(\Delta'+2) \lceil \ln n_2 \rceil $ dimensions.
The following Theorem   gives the time complexity of our randomized algorithm to
construct such a cube representation.

\begin{theo}
Let $G = (A \cup B, E)$ be a bipartite graph with $n = n_1 + n_2$
vertices, $m$ edges and let $\Delta' = \min\{\Delta_A, \Delta_B\}$.
Then, with high probability, the cube representation of $G$ in
$3(\Delta'+2) \lceil \ln n_2 \rceil$ dimensions can be generated in
$O(\Delta'(m+n)\ln n_2)$ time.
\end{theo}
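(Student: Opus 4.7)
The plan is to account for the work done by each of the three ingredients used in the construction of the theorem preceding this one and then observe that the whole computation is dominated by the $G^{*}$-phase. Recall that, by the remark following Lemma \ref{3.2}, running \textbf{RANDUNIT}($G$) a total of $t = \lceil 3(\Delta'+1)\ln n_2 \rceil$ times and intersecting the outputs with $H_1$ and $H_2$ yields a cube representation of $G$ in $3(\Delta'+2)\lceil \ln n_2 \rceil$ dimensions with probability at least $1 - 1/n_2$, which is the "with high probability" in the statement.

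First I would bound the cost of a single invocation of \textbf{RANDUNIT}($G$). Generating a uniformly random permutation of either $A$ or $B$ is $O(n)$. Outputting the unit interval graph $U(\pi,A,B,G)$ (or $U(\pi,B,A,G)$) only requires storing the real value $f(v)$ for every vertex. Values for $A$ are directly the permutation; values for $v \in B$ need $f(v) = n + \min_{x \in N(v)}\pi(x)$, which is obtained in one pass over the adjacency list of $v$. Summed over all vertices, computing these minima costs $O(m+n)$. Hence each call to \textbf{RANDUNIT} runs in $O(m+n)$ time, and the $t$ calls together contribute $O\bigl(\Delta'(m+n)\ln n_2\bigr)$.

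Second I would account for $H_1$ and $H_2$. By Definitions \ref{H1definition} and \ref{H2definition}, each unit interval graph $I_i$ (respectively $J_i$) is specified by assigning one of the values $\{0,1,2\}$ to each vertex based on the $i$th bit of its index, which is $O(n)$ work per graph. There are $\lceil \ln n_1\rceil + \lceil \ln n_2\rceil = O(\ln n_2)$ such graphs, so constructing $H_1$ and $H_2$ together takes $O(n\ln n_2)$ time, which is absorbed by the $G^{*}$-phase cost.

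Adding these contributions gives an overall running time of $O\bigl(\Delta'(m+n)\ln n_2\bigr)$, matching the claim. The main thing to be careful about is ensuring that we count only the cost of emitting the \emph{representation} (i.e.\ the real-valued coordinates) of each unit interval graph rather than explicitly listing its edges; the latter could be as large as $\Theta(n^2)$ per coordinate and would spoil the bound. Once this is recognized the argument is a direct bookkeeping exercise with no further obstacles.
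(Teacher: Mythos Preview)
Your proposal is correct and follows essentially the same approach as the paper: bound one call to \textbf{RANDUNIT} by $O(m+n)$ via the adjacency-list scan for the minima, then multiply by the $O(\Delta'\ln n_2)$ iterations. In fact your write-up is slightly more complete than the paper's, since you explicitly account for the $O(n\ln n_2)$ cost of building $H_1$ and $H_2$ and flag the representation-vs-edge-list distinction, both of which the paper leaves implicit.
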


\begin{proof}
We assume that a random permutation $\pi$ on $n_1$ vertices can be
computed in $O(n_1)$ time. Recall that we assign $n$ intervals to
$n$ vertices as follows. If $v \in A$, then we assign the interval
$[\pi(v), n+\pi(v)]$ to $v$. If $v \in B$, then let $t = \min_{x \in
N(v)}{\pi(x)}$. Now, the interval $[t+n, t+2n]$ is given to the
vertex $v$. Since number of edges in the graph $m =
\frac{1}{2}\sum_{u \in A\cup B}{d(u)}$, one invocation of {\bf
RANDUNIT($G$)} needs $O(m+n)$ time. Since we need to invoke the
algorithm {\bf RANDUNIT($G$)} $O(\Delta'\ln n_2)$ times, the overall
algorithm that generates the cube representation in $3(\Delta'+2)
\lceil \ln n_2 \rceil $ dimensions runs in $O(\Delta'(m+n)\ln n_2)$
time
\end{proof}

%\bibliographystyle{plain}
%\bibliography{references}

\begin{thebibliography}{10}


\bibitem{irith}
S. Bellantoni, Irith Ben-Arroyo Hartman, T. M. Przytycka, and S.
Whitesides.
\newblock Grid intersection graphs and boxicity
\newblock {\em Discrete Mathematics}, 114(1-3):41--49, 1993.

\bibitem{cub1}
L.~S. Chandran, C.~Mannino, and G.~Oriolo.
\newblock On the cubicity of certain graphs.
\newblock {\em Inform. Process. Lett.}, 94(3):113--118, 2005.

\bibitem{cub2}
L.~Sunil Chandran and Naveen Sivadasan.
\newblock The cubicity of hypercube graphs.
\newblock {\em Discrete Mathematics}, 308(23):5795--5800, 2008.

\bibitem{cubicity}
L.~Sunil Chandran, Mathew~C. Francis, and Naveen Sivadasan.
\newblock Representing graphs as the intersection of axis-parallel cubes.
\newblock Manuscript, available at http://arxiv.org/abs/cs/0607092.

\bibitem{cozzens81}
M.~B. Cozzens.
\newblock {\em Higher and multidimensional analogues of interval graphs}.
\newblock PhD thesis, Rutgers University, New Brunswick, NJ, 1981.
%
%\bibitem{erlebach}
%T.~Erlebach, K.~Jansen, and E.~Seidel.
%\newblock Polynomial-time approximation schemes for geometric intersection
%  graphs.
%\newblock To appear in SIAM Journal of Computing.
%
%\bibitem{hastad}
%J.~Hastad.
%\newblock Clique is hard to approximate within $n^{1-\epsilon}$.
%\newblock {\em Acta Mathematica}, 182:105--142, 1998.

\bibitem{kratochvil94}
J.~Kratochv{\'{\i}}l.
\newblock A special planar satisfiability problem and a consequence of its
  {NP}-completeness.
\newblock {\em Discrete Appl. Math.}, 52(3):233--252, 1994.

\bibitem{maehara}
H.~Maehara.
\newblock Sphericity exceeds cubicity for almost all complete bipartite graphs.
\newblock {\em J. Combin. Theory Ser. B}, 40(2):231--235, 1986.

\bibitem{quint}
T.~S. Michael and T.~Quint.
\newblock Sphericity, cubicity, and edge clique covers of graphs.
\newblock {\em Discrete Appl. Math.}, 154(8):1309--1313, 2006.

\bibitem{roberts69}
F.~S. Roberts.
\newblock On the boxicity and cubicity of a graph.
\newblock In {\em Recent Progress in Combinatorics (Proc. Third Waterloo Conf.
  on Combinatorics, 1968)}, pages 301--310. Academic Press, New York, 1969.


\bibitem{trotter}
W. T. Trotter.
\newblock A chanracterization of Roberts' inequality fro boxicity.
\newblock In {\em Discrete Mathematics} 28:303--313, 1979.


\bibitem{yannakakis}
M.~Yannakakis.
\newblock The complexity of the partial order dimension problem.
\newblock {\em SIAM J. Algebraic Discrete Methods}, 3(3):351--358, 1982.

\end{thebibliography}

\end{document}